\newcommand{\pd}[2]{\frac{\partial #1}{\partial #2}}
\newtheorem{theorem}{Theorem}
\newcommand{\bs}[1]{\boldsymbol{#1}}
\newcommand\Rey{\mbox{\text{Re}}}  % Reynolds number
\newcommand\We{\mbox{\text{We}}}
\begin{document}
\author{Joshua Binns and Andrew Wynn}
 \address{Department of Aeronautics, Imperial College London, SW7 2AZ.}
\title{Global stability of Oldroyd-B fluids in plane Couette flow}
\email{a.wynn@imperial.ac.uk}

\date{\today}

\begin{abstract}
We prove conditions for global nonlinear stability of Oldroyd-B viscoelatic fluid flows in the Couette shear flow geometry. Global stability is inferred by analysing a new functional, called a perturbation entropy, to quantify the magnitude of the polymer perturbations from their steady-state values. The conditions for global stability extend, in a physically natural manner, classical results on global stability of Newtonian Couette flow.  
\end{abstract}

\maketitle

\section{Introduction}

Flows of polymer solutions are of wide importance to the chemical, food, cosmetic and pharmaceutical industries, being found in everyday items such as shampoo, yoghurt, and paints. Significant differences exist in the behaviour of Newtonian flows, such as air or water, and polymeric flows, dramatically exemplified by the maximum drag reduction phenomenon in which polymers fundamentally alter the nature of turbulence in pipes or channels \cite{choueiri2018}. To gain a full physical understanding of these differences requires characterizing the rich nature of viscoelastic flow instabilities, which may be purely elastic \cite{groisman2000} or arise from a combination of elastic and inertial effects \cite{samanta2013}. Crucially, elasticity can be destabilizing with instability \cite{Qin2019,pan2013} and turbulence \cite{samanta2013} observed at significantly lower Reynolds numbers $(\Rey$; the ratio of inertial to viscous forces of the solvent) than for equivalent Newtonian flows. Given this increased complexity, many fundamental questions on the stability of viscoelastic flows remain unresolved. 

Perhaps the most basic open question concerns global nonlinear stability, that is, if a viscoelastic flow is stable to initial disturbances of arbitrary amplitude. Surprisingly, even for the simplest case of Oldroyd-B fluids in the plane Couette geometry, it is unknown if there are any flow conditions under which global stability holds \cite{sanchez2022}. Numerical evidence suggests that Couette flow of an Oldroyd-B fluid is linearly stable \cite{chaudhary_2019,chokshi2009} but, unfortunately, this gives no information about global stability to arbitrary disturbances. Indeed, Newtonian Couette flow is also linearly stable, but it is well-known that finite-amplitude perturbations can trigger the transition turbulence \cite{duguet_2013} via nonlinear amplification mechanisms \cite{pringle_2010}, and this has been observed experimentally at Reynolds numbers as low as $\Rey = 325$ \cite{bottin_1998}. In the Newtonian case, however, global stability can be proven if the Reynolds number is sufficiently small, namely $\Rey  \lesssim  83$ \cite{joseph_1966,schmid2001}, which definitively rules out the possibility of persistent unsteadiness or turbulence in these conditions. 

Global stability of Newtonian Couette flow is a classical result, proven using Orr's 1907 energy method \cite{orr_1907} in which the kinetic energy of velocity perturbations is shown to be monotonically decreasing. The challenge of extending this analysis to Oldroyd-B fluids is to find a suitable elastic ``energy'' budget to quantify the magnitude of polymeric perturbations, and their potentially complex viscoelastic interactions with the velocity disturbances.  For Couette flow, it is known that natural choices such as potential elastic energy \cite{doering2006} or relative entropy \cite{jourdain2010} cannot be used to prove global stability for Oldroyd-B fluids. 

In this paper we propose a new way to quantify polymer disturbances, called a {\em perturbation entropy}, which generalizes the idea of a relative entropy functional \cite{jourdain2010} and gives the extra flexibility to prove, for the first time, global stability for Oldroyd-B fluids in plane Couette flow.  Global stability in this setting depends on three non-dimensional parameters: the Reynolds number $\Rey>0$; the Weissenberg number $\We>0$, characterizing the ratio of elastic to viscous stresses; and the ratio $0 < \beta < 1$ of the solvent viscosity to the total viscosity. We show that for any $0 < \We < 1$, nonlinear global stability holds if 
\begin{equation} \label{eq:result}
c_1 \left( \frac{1-\beta}{1-\We} \right) \We  + c_2 \Rey < \beta,
\end{equation}
where $c_1,c_2>0$ are universal constants. This is a natural extension the energy stability condition from the Newtonian case: for any Reynolds number at which Newtonian Couette flow is energy stable ($\Rey < \Rey_{\text{E}} \approx 83$), there exists a finite parameter range in $(\We,\beta)$ for which the viscoelastic Oldroyd-B model is also globally stable. 

To define the problem, suppose that a viscoelastic fluid with velocity $\bs{u} = u_1 \bs{e}_1 + u_2 \bs{e}_2 + u_3 \bs{e}_3$ is confined between two parallel plates at $x_2=0$ and $x_2=\ell$, with the bottom plate at rest and the upper plate moving in the $\bs{e}_1$ direction with velocity $U$. Periodic boundary conditions are assumed in the streamwise $\bs{e}_1$ and cross-stream $\bs{e}_3$ directions. Using $\ell$ as a length scale and $\ell/U$ as a time scale, the non-dimensional Oldroyd-B model is
\begin{equation} \label{eq:OB_pde}
\begin{split}
\frac{\partial \bs{u}}{\partial t} + (\bs{u} \cdot \nabla) \bs{u} + \nabla p  &= \frac{\beta}{\Rey} \Delta \bs{u} + \frac{1-\beta}{\Rey \We} \nabla \cdot \bs{c},\\
\nabla \cdot \bs{u} &= 0,\\
\frac{\partial \bs{c}}{\partial t} + (\bs{u}\cdot \nabla)\bs{c} = \bs{c} \cdot (\nabla & \bs{u}) + (\nabla \bs{u})^\top \!\! \cdot \! \bs{c} + \frac{1}{\We} (I-\bs{c}),
\end{split}
\end{equation}
where $p$ is the pressure and $\bs{c}(\bs{x},t) = (c_{ij}(\bs{x},t)\bs{e}_i\bs{e}_j)_{i,j=1}^3$ is a covariance tensor of the polymer orientation. The Reynolds number is $\Rey = U\ell/\eta$, where $\eta=\eta_s + \eta_p$ is the sum of the solvent and polymer viscosities,  $\beta = \eta_s / (\eta_s + \eta_p)$, and the Weissenberg number is $\We = U\lambda/\ell$, where $\lambda$ is a characteristic relaxation timescale of the dissolved polymers. 

For plane Couette flow as described, the Oldroyd-B model  \eqref{eq:OB_pde} has a steady solution 
\begin{equation} \label{eq:steady}
\bar{\bs{u}}(\bs{x}) = x_2 \bs{e}_1, \qquad \bar{\bs{c}} = \begin{pmatrix} 1 + 2\We^2 & \We & 0\\ \We & 1 & 0 \\ 0 & 0 & 1 \end{pmatrix}. 
\end{equation}
Letting $\bs{u} = \bar{\bs{u}} + \bs{v}, \bs{c} = \bar{\bs{c}} + \bs{d}$, the polymer perturbations satisfy 
\begin{equation} \label{eq:tensor_pert}
\pd{\bs{d}}{t}  + (\bs{u} \cdot \nabla )\bs{d} = s(\bs{d},\nabla \bs{u}) + s(\bar{\bs{c}},\nabla \bs{v}) - \frac{1}{\We} \bs{d},
\end{equation}
where $s(\bs{a},\bs{b}):=\bs{a} \cdot \bs{b} + \bs{b}^\top \cdot \bs{a}^\top$ is the symmetric part of the tensor product $\bs{a} \cdot \bs{b}$. The covariance matrix $\bs{c}$ is symmetric and positive definite, written $\bs{c} \succeq 0$. If $\bs{c}$ is initially symmetric and positive definite, which is assumed, then this property is preserved along solutions to \eqref{eq:OB_pde}. It follows that   $\bs{d}$ is symmetric at all times, but it is not true in general that $\bs{d} \succeq 0$. 

\section{Energy and entropy estimates}

To study global stability of the Oldroyd-B model, first consider the fluid perturbation energy $E(t):=\frac12 \|\bs{v}\|_2^2 := \frac12 \int |\bs{v}(x,t)|^2 d\bs{x}$. Differentiating, integrating by parts and using Dirichlet boundary conditions of the perturbation $\bs{v}$ gives
\begin{align} 
\dot{E}(t)  &= -\frac{\beta}{\Rey} \left\| \nabla \bs{v} \right\|_2^2 - \int v_1 v_2 d\bs{x}  - \frac{1-\beta}{\Rey \We} \int \langle \bs{d},  \nabla \bs{v} \rangle d\bs{x}, \label{eq:pert_energy}
\end{align}
where $\langle \bs{a},\bs{b} \rangle := \text{tr}(\bs{a}^\top \cdot \bs{b}) = a_{ij}b_{ij}$ is the Frobinus inner product with norm $\|\bs{a}\|_F^2 := \langle \bs{a}, \bs{a}\rangle$. 

In the Newtonian case $\beta=1$, nonlinear energy stability at sufficiently small $\Rey$ follows from \eqref{eq:pert_energy} and the Poincar\'{e} inequality. Due to the final coupling term in \eqref{eq:pert_energy},  however, extending this result to the viscoelastic case $\beta>0$ requires an appropriate quantification of the polymer perturbations. To do this, for $\alpha >0$, consider
\[
h(\bs{d}) := \frac12 \int \left[ \text{tr}(I+\alpha \bs{d}) - 3 - \log{(\det{(I+\alpha \bs{d})})} \right] d\bs{x}, 
\]
which we call a {\em perturbation entropy}. This can be interpreted as a parametrized {\em relative entropy}, which will be discussed subsequently. 

Throughout this paper, we assume that $\We \in (0,1)$ and that $\alpha =(1-\We )(1 - \We(1+\We^{2})^{-\frac12})$. It is shown in the Appendix that for these choices, $I+\alpha \bs{d} \succ 0$ for any solution to \eqref{eq:OB_pde}. Consequently, the functional $H(t):=h(\bs{d}(\cdot,t))$ is well-defined along trajectories of the system. The properties $h(\bs{0})=0$ and 
\begin{equation} \label{eq:H_bounds}
0 < h(\bs{d}) \leq \frac12 \int \| (I+\alpha \bs{d} )^{-\frac12} \alpha \bs{d}\|_F^2 d\bs{x}, \quad \forall \bs{d} \neq 0, 
\end{equation}
also proven in the Appendix, will be required. 

Given $E(t)$ and $H(t)$, our strategy is to find conditions under which the energy--entropy functional 
\[
V(t):=\alpha (1-\beta)^{-1} \We \Rey E(t)  + H(t)
\]
 decreases along trajectories of \eqref{eq:OB_pde}. To differentiate $H$, first compute
\begin{align}
\frac12 \frac{d}{dt} &\int \text{tr}(I+\alpha \bs{d}) d\bs{x} = \alpha \int d_{12} d\bs{x} + \int \left[- \frac{\alpha}{2\We} \text{tr}(\bs{d}) + \alpha \langle \bs{d}, \nabla \bs{v} \rangle  \right] \bs{dx}  \label{eq:trace_deriv}
\end{align}
which follows after integration by parts, and using the Dirichlet boundary conditions of $\bs{v}$ and $\nabla \cdot \bs{v}=0$. To differentiate the logarithmic term in $h$, we use the identity $\frac{\partial}{\partial t} \log{\det{\bs{a}}} = \text{tr}\left(\bs{a}^{-1} \frac{\partial \bs{a}}{\partial t} \right)$, which is a consequence of the Jacobi formula, twice and  \eqref{eq:tensor_pert} to get
\begin{align}
 \frac{\partial}{\partial t}  \log{(\det(I+\alpha \bs{d}))} & =  - \left( \bs{u} \cdot \nabla \right)\left( \log{(\det{(I+\alpha \bs{d})})} \right) \nonumber \\ 
& \quad +\alpha \text{tr}\left( (I+\alpha\bs{d})^{-1} \left[ s(\bs{d}, \nabla \bs{u}) + s(\bar{\bs{c}}, \nabla \bs{v}) \right] \right) \nonumber \\
& \quad - \frac{\alpha}{\We} \text{tr}\left( (I+\alpha \bs{d})^{-1} \bs{d} \right). \label{eq:log_det_deriv}
\end{align}
Combining \eqref{eq:pert_energy}, \eqref{eq:trace_deriv} and \eqref{eq:log_det_deriv}, and using $\nabla \cdot \bs{u}=0$, it follows that along trajectories of  \eqref{eq:OB_pde},
\begin{align}
\dot V(t) &= \frac{\alpha \We}{1-\beta} \left( -\beta\|\nabla \bs{v} \|_2^2 - \Rey \int v_1v_2 d\bs{x} \right) \nonumber \\
&\quad - \frac{\alpha^2}{2\We} \int \|(I+\alpha \bs{d})^{-\frac12} \bs{d} \|_F^2 d\bs{x} + \alpha \int  d_{12} d\bs{x} \nonumber \\
& \quad - \frac{\alpha}{2} \! \int \! \text{tr}\left( (I+\alpha\bs{d})^{-1} \! \left[ s(\bs{d}, \nabla \bs{u})\! + \! s(\bar{\bs{c}}, \nabla \bs{v}) \right] \right) \! d\bs{x}. \label{eq:lyap_deriv}
\end{align}

The Poincar\'{e} inequality and \eqref{eq:H_bounds} imply that, for sufficiently small $\Rey>0$, the first two terms of the above equation can be bounded above by $-cV(t)$, for some $c>0$. Hence, to infer global stability requires estimates on final two sign-indefinite terms. One difficulty is that the linear term $d_{12}$ cannot be directly bounded by $\|(I+\alpha \bs{d})^{-\frac12} \bs{d}\|_F^2$, which is quadratic as $\bs{d} \rightarrow \bs{0}$. We now explain how to avoid this obstacle. 

By the cyclic property of the trace operator, for any symmetric $\bs{a}$ it follows that $\text{tr}\left(\bs{a}^{-1} s(\bs{a},\nabla \bs{u}) \right) = \text{tr}\left(\nabla \bs{u} + (\nabla \bs{u})^\top  \right) = 2\nabla \cdot \bs{u} =0$. Applying this identity with $\bs{a}=I+\alpha \bs{d}$ gives 
\begin{align}
\text{tr}&\left( (I+\alpha\bs{d})^{-1} \left[ s(\alpha \bs{d},\nabla \bs{u}) + s(\alpha \bar{\bs{c}},\nabla \bs{v}) \right] \right) \nonumber \\ 
&= \text{tr}\big( (I+\alpha\bs{d})^{-1} [  s(I+\alpha\bs{d},\nabla \bs{u}) -s(I,\nabla \bs{u}) +  s(\alpha \bar{\bs{c}},\nabla \bs{v}) ] \big) \nonumber \\
&= \text{tr}\left( (I+\alpha\bs{d})^{-1} \left[  -s(I,\nabla \bs{u}) +  s(\alpha \bar{\bs{c}},\nabla \bs{v}) \right] \right) \nonumber \\
&= \text{tr}\left( (I+\alpha\bs{d})^{-1} \left[  -s(I,\nabla \bar{\bs{u}}) -  s(I-\alpha \bar{\bs{c}},\nabla \bs{v}) \right] \right) \label{eq:d12} 
\end{align}
and it follows that the final two, sign indefinite, terms in \eqref{eq:lyap_deriv} can be written as 
\begin{align*}
& \underbrace{ \int \alpha d_{12}  + \frac12 \text{tr}\left( (I+\alpha\bs{d})^{-1} s(I,\nabla \bar{\bs{u}}) \right) d\bs{x} }_{:=Q_{1}(\bs{d})} \\
& \qquad + \underbrace{ \frac12 \int \text{tr}\left( (I+\alpha\bs{d})^{-1} s(I-\alpha \bar{\bs{c}},\nabla \bs{v}) \right) d\bs{x} }_{:=Q_2(\bs{d},\bs{v})}.
\end{align*}
{\em A bound on $Q_{1}$}. Let $\bs{b}:=\nabla \bar{\bs{u}} + (\nabla \bar{\bs{u}})^\top$. Then 
\[
Q_1(\bs{d}) = \frac12 \int \text{tr}\left( \left( \alpha \bs{d} + (I+\alpha \bs{d})^{-1} \right) \bs{b} \right) d\bs{x}.
\]
Using that $\text{tr}(\bs{b})=0$ and the identity $A-I + (I+A)^{-1} = (I+A)^{-1}A^2$ with $A=\alpha \bs{d}$, it follows that
\begin{align*}
Q_1(\bs{d}) &= \frac12 \int \text{tr} \left( (I+\alpha \bs{d})^{-1} (\alpha \bs{d})^2 \bs{b} \right) d\bs{x}\\
						&= \frac12 \int \text{tr} \left( (I+\alpha \bs{d})^{-1} (\alpha \bs{d})^2 (\bs{b}-I) \right) d\bs{x}\\
						& \quad + \frac{\alpha^2}{2} \int \text{tr} \left( (I+\alpha \bs{d})^{-1} \bs{d}^2 \right) d\bs{x}\\
						&= \frac{\alpha^2}{2} \int \text{tr} \left( (I+\alpha \bs{d})^{-\frac12} \bs{d} (\bs{b}-I) \bs{d} (I+\alpha \bs{d})^{-\frac12} \right) d\bs{x}\\
						& \quad + \frac{\alpha^2}{2} \int \| (I+\alpha\bs{d})^{-\frac12} \bs{d} \|_F^2 d\bs{x},
\end{align*}
where in the final line, we have used that $(I+\alpha \bs{d}) \succ 0$, the cyclic property of trace, and the identity $(I+\alpha \bs{d})^{-1}\bs{d} = \bs{d}(I+\alpha \bs{d})^{-1}$. Since the eigenvalues of $\bs{b}-I$ are $\{0,-2\}$, this matrix is negative definite and it follows that $\bs{q}^\top \bs{b} \bs{q} \preceq 0$ for $\bs{q}=\bs{d}(I+\alpha \bs{d})^{-\frac12}$. Hence, $\text{tr}(\bs{q}^\top \bs{b} \bs{q}) \leq 0$ and
\begin{equation} \label{eq:K_1_est}
Q_1(\bs{d}) \leq  \frac{\alpha^2}{2} \int \| (I+\alpha\bs{d})^{-\frac12} \bs{d} \|_F^2 d\bs{x}.
\end{equation}

{\em An upper bound on $Q_2$}. Note that $\int \text{tr}(s(\bs{a},\nabla \bs{v})) d\bs{x}=0$ for any $\bs{a}$ which is independent of $\bs{x}$, since this expression is linear in $\nabla \bs{v}$, and $\bs{v}$ satisfies periodic and Dirichlet boundary conditions. Consequently, after using the identity $(I+\alpha \bs{d})^{-1} = I- (I+\alpha \bs{d})^{-1} \alpha \bs{d}$, it follows that 
\begin{align*}
Q_2(\bs{d},\bs{v}) &= -\frac12 \int \text{tr}\left( (I+\alpha \bs{d})^{-1} \alpha \bs{d} \cdot s(I-\alpha \bar{\bs{c}},\nabla \bs{v}) \right) d\bs{x}\\
& = - \int \text{tr}\left(\alpha \bs{d} (I+\alpha \bs{d})^{-1} (I-\alpha \bar{\bs{c}}) \nabla \bs{v} \right) d\bs{x}. 
\end{align*}
Using Young's inequality and the fact that $\text{tr}(\bs{a}\bs{b}) \leq \|\bs{a}\|_F\|\bs{b}\|_F$, for any $\epsilon >0$,
\begin{align*}
\left| Q_2(\bs{d},\bs{v}) \right| &\leq \frac{\epsilon \alpha^2}{2\We} \int \|(I+\alpha \bs{d})^{-\frac12} \bs{d}\|_F^2 d\bs{x} \\
& \quad +  \frac{\We}{2\epsilon} \int \|(I+\alpha \bs{d})^{-\frac12} (I-\alpha \bar{\bs{c}})\|_F^2 \| \nabla \bs{v} \|_F^2 d\bs{x}. 
\end{align*}
Next, since $I+\alpha \bs{d} \succeq I-\alpha \bar{\bs{c}}$, 
\begin{align*}
\|(I+\alpha \bs{d})^{-\frac12} (I-\alpha \bar{\bs{c}})\|_F^2 &= \text{tr}\left((I-\alpha \bar{\bs{c}})(I-\alpha \bs{d})^{-1} (I-\alpha \bar{\bs{c}}) \right) \\
 &\leq \text{tr}(I-\alpha \bar{\bs{c}}). 
\end{align*}
which gives the upper bound
\begin{align} 
|Q_2(\bs{d},\bs{v})| &\leq \frac{\epsilon \alpha^2}{2\We} \int \|(I+\alpha \bs{d})^{-\frac12} \bs{d}\|_F^2 d\bs{x} \nonumber \\
& \qquad +  \frac{\We }{2\epsilon} \text{tr}(I-\alpha \bar{\bs{c}}) \| \nabla \bs{v} \|_2^2. \label{eq:K_2_est}
\end{align}
Returning to analysis of $V(t)$, substitute the bounds \eqref{eq:K_1_est}, \eqref{eq:K_2_est} into \eqref{eq:lyap_deriv} to obtain 
\begin{align} 
\dot{V}(t) &\leq   -  \frac{\alpha^2}{2} \left[ \left( \frac{1-\epsilon}{\We} \right) -1 \right] \int \|(I+\alpha \bs{d})^{-\frac12} \bs{d} \|_F^2 d\bs{x} \nonumber \\
& \quad - \We \left[\frac{\alpha \beta}{1-\beta} \!-\! \frac{ \text{tr}(I-\alpha \bar{\bs{c}})}{2\epsilon} \! -\! \frac{\alpha C_P\Rey }{1-\beta}   \right] \|\nabla \bs{v}\|_2^2, \label{eq:lyap_deriv_2}
\end{align}
where $C_P>0$ a constant (which exists by the Poincar\'{e} inequality) such that $\left|\int v_1 v_2 d\bs{x}\right| \leq C_P \|\nabla \bs{v}\|_2^2$. We are now able to state the main result on nonlinear stability of the Oldroyd-B model to arbitrary initial perturbations. 

\begin{theorem} \label{thm:OB_global_stability}
Let $0 < \We < 1$. There exist absolute constants $c_1,c_2>0$ such that if
\begin{equation} \label{eq:gs_conditions}
\quad c_1 \, \left(\frac{1-\beta}{1-\We}\right) \We +  c_2 \, \Rey < \beta,
\end{equation}
then any solution to the Oldroyd-B system \eqref{eq:OB_pde} for plane Couette flow satisfies $(\bs{u},\bs{c}) \rightarrow (\bar{\bs{u}},\bar{\bs{c}})$ as $t \rightarrow \infty$. 
\end{theorem}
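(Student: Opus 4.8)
The plan is to deduce from \eqref{eq:lyap_deriv_2} that, under \eqref{eq:gs_conditions}, the free parameter $\epsilon>0$ can be chosen so that both bracketed coefficients are strictly positive, giving a differential inequality $\dot V(t)\le -c\,V(t)$ with $c>0$; exponential decay of $V$ then yields $E(t)\to 0$ and $H(t)\to 0$, which is exactly the asserted convergence. The coefficient multiplying the entropy-dissipation integral is nonnegative precisely when $\tfrac{1-\epsilon}{\We}-1\ge 0$, i.e.\ $\epsilon\le 1-\We$, so I would take $\epsilon$ proportional to $1-\We$ (for definiteness $\epsilon=\tfrac12(1-\We)$), making this coefficient equal to $\tfrac{1-\We}{2\We}>0$. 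It then remains to make the velocity-dissipation bracket positive.

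Dividing that bracket by $\alpha/(1-\beta)>0$, its positivity is equivalent to
\[
\frac{\text{tr}(I-\alpha\bar{\bs c})}{2\epsilon\,\alpha}\,(1-\beta)+C_P\,\Rey<\beta .
\]
The key algebraic input is to control the prefactor using \eqref{eq:steady} and the explicit $\alpha$. Since $\text{tr}(\bar{\bs c})=3+2\We^2$, one has $\text{tr}(I-\alpha\bar{\bs c})=3(1-\alpha)-2\alpha\We^2$; and from $\alpha=(1-\We)\bigl(1-\We(1+\We^2)^{-1/2}\bigr)$ a direct expansion gives $1-\alpha=\We\bigl(1+(1-\We)(1+\We^2)^{-1/2}\bigr)$, so that
\[
\frac{\text{tr}(I-\alpha\bar{\bs c})}{\alpha}\le\frac{3(1-\alpha)}{\alpha}=\frac{3\bigl(1+(1-\We)(1+\We^2)^{-1/2}\bigr)}{1-\We(1+\We^2)^{-1/2}}\cdot\frac{\We}{1-\We}\le c\,\frac{\We}{1-\We},
\]
with $c$ absolute, since on $(0,1)$ the displayed numerator is at most $6$ and the denominator is at least $1-2^{-1/2}$. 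Feeding this bound into the inequality above and choosing $\epsilon$ as large as the admissible range permits reduces the requirement to a condition of the type \eqref{eq:result}; matching the single power of $(1-\We)$ and pinning down $c_1$ and $c_2=C_P$ is where the argument is most delicate, as discussed below.

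With both brackets positive, \eqref{eq:lyap_deriv_2} gives $\dot V\le -A\int\|(I+\alpha\bs d)^{-1/2}\bs d\|_F^2\,d\bs x-B\|\nabla\bs v\|_2^2$ with $A,B>0$. The Poincar\'e inequality bounds $E=\tfrac12\|\bs v\|_2^2$ by a multiple of $\|\nabla\bs v\|_2^2$, while \eqref{eq:H_bounds} bounds $H$ by a multiple of $\int\|(I+\alpha\bs d)^{-1/2}\bs d\|_F^2\,d\bs x$; since $V=\alpha(1-\beta)^{-1}\We\Rey\,E+H$, each summand is dominated by its own dissipation term, giving $\dot V\le -c\,V$. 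Gr\"onwall's inequality then forces $V(t)\le V(0)e^{-ct}\to 0$, and as $E,H\ge 0$ with $H>0$ whenever $\bs d\neq\bs 0$ by \eqref{eq:H_bounds}, both $E(t)\to 0$ and $H(t)\to 0$. The former gives $\bs v\to\bs 0$, i.e.\ $\bs u\to\bar{\bs u}$; for the latter, the bracket in $h$ equals $\sum_i(\mu_i-1-\log\mu_i)$ in the eigenvalues $\mu_i$ of $I+\alpha\bs d$, a quantity vanishing only at $\mu_i=1$ and coercive in $\mu_i-1$ on bounded eigenvalue ranges, so $H\to 0$ yields $\bs d\to\bs 0$ and hence $\bs c\to\bar{\bs c}$.

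I expect the principal obstacle to be the \emph{simultaneous} positivity of the two brackets, and in particular extracting the single power of $(1-\We)$ in \eqref{eq:result}. The difficulty is that the isotropic relaxation term and the $Q_1$ production nearly cancel as $\We\to 1$, so the admissible window $\epsilon\le 1-\We$ collapses, while the $Q_2$ estimate \eqref{eq:K_2_est} charges a factor $\epsilon^{-1}$ to the velocity-dissipation budget; retaining the negative contribution discarded in the $Q_1$ bound \eqref{eq:K_1_est} (which supplies extra $\bs d$-dissipation in the $\bs b-I$ directions) and exploiting the precise value of $\alpha$ appear essential to obtain the sharp dependence rather than a spurious extra power of $(1-\We)^{-1}$. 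A secondary technical point is upgrading $H(t)\to 0$ to convergence of $\bs d$ in a genuine norm, which needs a uniform-in-time bound on the eigenvalues of $I+\alpha\bs d$ so that the coercivity of the entropy integrand can be invoked.
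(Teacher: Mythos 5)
Your overall strategy is exactly the paper's: take $\epsilon=\tfrac12(1-\We)$, arrange for both bracketed coefficients in \eqref{eq:lyap_deriv_2} to be strictly positive, and conclude via Gr\"onwall together with \eqref{eq:H_bounds}. The gap you flag at the end is, however, genuine: what you prove is strictly weaker than the theorem. Your (correct, and in fact order-sharp) estimate $\text{tr}(I-\alpha\bar{\bs{c}})/\alpha\le c\,\We/(1-\We)$, inserted into the positivity requirement $\tfrac{(1-\beta)}{2\epsilon\alpha}\,\text{tr}(I-\alpha\bar{\bs{c}})+C_P\Rey<\beta$ with $2\epsilon=1-\We$, gives only the sufficient condition $c\,(1-\beta)\We/(1-\We)^{2}+C_P\Rey<\beta$, carrying an extra power of $(1-\We)^{-1}$. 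Moreover, the repair you suggest---retaining the negative term discarded in \eqref{eq:K_1_est}---cannot close this gap: $\bs{b}-I$ is only negative \emph{semi}definite, with kernel spanned by $\bs{e}_1+\bs{e}_2$, so for perturbations $\bs{d}$ aligned with that direction (e.g.\ $\bs{d}$ proportional to $(\bs{e}_1+\bs{e}_2)(\bs{e}_1+\bs{e}_2)^\top$) the discarded term vanishes identically and supplies no additional dissipation.

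What your computation actually exposes is that the paper's own proof is defective at exactly this step. The paper defines $c_1:=\sup\{\text{tr}(I-\alpha\bar{\bs{c}})/(\alpha\We):0<\We<1\}$ and asserts finiteness ($\approx 15.4$); but with the paper's choice $\alpha=(1-\We)\bigl(1-\We(1+\We^2)^{-1/2}\bigr)$ one has $\alpha\to0$ while $\text{tr}(I-\alpha\bar{\bs{c}})\to3$ as $\We\to1$, so $\text{tr}(I-\alpha\bar{\bs{c}})/(\alpha\We)\sim 3\bigl[(1-1/\sqrt{2})(1-\We)\bigr]^{-1}\to\infty$, in agreement with your upper bound being sharp. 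The theorem can nevertheless be rescued within this same framework by changing $\alpha$, not $\epsilon$ or the $Q_1$ estimate. The only structural constraint on $\alpha$ is $0<\alpha<\alpha_{\max}:=1-\We(1+\We^2)^{-1/2}$, which keeps $I-\alpha\bar{\bs{c}}\succ0$ and hence validates every estimate leading to \eqref{eq:lyap_deriv_2}; and with $\epsilon=\theta(1-\We)$, $\theta<1$, the velocity bracket follows from \eqref{eq:gs_conditions} provided $3/\alpha\le 3+2\We^2+2\theta c_1\We$. Since $\sup_{0<\We<1}\bigl[3-\alpha_{\max}(3+2\We^2)\bigr]/(2\We\,\alpha_{\max})\approx 2.62<\infty$, the choice $\alpha:=3/(3+2\We^2+2\theta c_1\We)$ with, say, $c_1=3$ and $\theta$ close to $1$ satisfies both constraints for every $\We\in(0,1)$ and yields the single power of $(1-\We)$ claimed in the theorem. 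The essential point missed both by you and by the paper is that $\alpha$ must remain bounded away from zero as $\We\to1$; any $\alpha$ vanishing linearly in $1-\We$ forces precisely the spurious extra factor you identified.
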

\begin{proof}
Let $\epsilon = \frac12 (1-\We)>0$. Then, by \eqref{eq:H_bounds} the first term in \eqref{eq:lyap_deriv_2} is bounded above by $-cH(t)$ for some $c>0$. Next, let $c_2:=C_P$ and let
\[
c_1:= \sup\left\{ \frac{\text{tr}(I-\alpha \bar{\bs{c}})}{ \alpha \We  } : 0 < \We < 1\right\} < \infty,
\]
noting that it can be verified numerically that $c_1 \approx 15.4$.  Then, if $(\We,\Rey,\beta)$ satisfy \eqref{eq:gs_conditions} it follows from \eqref{eq:lyap_deriv_2} and the Poincar\'{e} inequality that $\dot{V}(t) \leq -cV(t)$ for some $c>0$. Gronwall's lemma then gives $V(t)\rightarrow 0$ as $t \rightarrow \infty$ and, by \eqref{eq:H_bounds}, the system is globally stable. 
\end{proof}

\section{Discussion}

Theorem \ref{thm:OB_global_stability} naturally extends the condition for nonlinear energy stability from Newtonian to viscoelastic flows. It is not difficult to see that the optimal (smallest) constant $c_2$ in \eqref{eq:gs_conditions} is $c_2   = \Rey_{\text{E}}^{-1}$ where $\Rey_{E}$ is largest Reynolds number for which the Newtonian flow is energy stable. Consequently, Theorem \ref{thm:OB_global_stability} then implies that whenever the Reynolds number is such that the Newtonian flow is energy stable, there exist parameters $\We,\beta \in (0,1)$ for which the viscoelastic Oldroyd-B model is also nonlinearly globally stable. 

Within the current proof, the largest possible range of parameters for global stability can be visualized by numerically optimizing the free parameters $\alpha$ and $\epsilon$. The associated stability boundaries for selected values of $\beta$ are shown in Figure \ref{fig:stability_boundary}. Interestingly, for sufficiently small $\beta$ global nonlinear viscoelastic stability is possible for $\We$ and $\Rey / \Rey_{\text{E}}$ arbitrarily close to $1$ implying that, in the sense of nonlinear stability, the Oldroyd-B model smoothly transitions to the best-known Newtonian results as $\beta \rightarrow 1$. Conversely, the range of parameters for which nonlinear stability holds vanishes as $\beta \rightarrow 0$, corresponding to the limiting case of the Upper Convected Maxwell model for which $\eta_s \rightarrow 0$. %In this sense, our results rigorously confirm the physical observation that increasing solvent viscosity has a stabilizing effect at for small Weisenberg number. 

\begin{figure}[h!]
                \includegraphics[width=.75\textwidth]{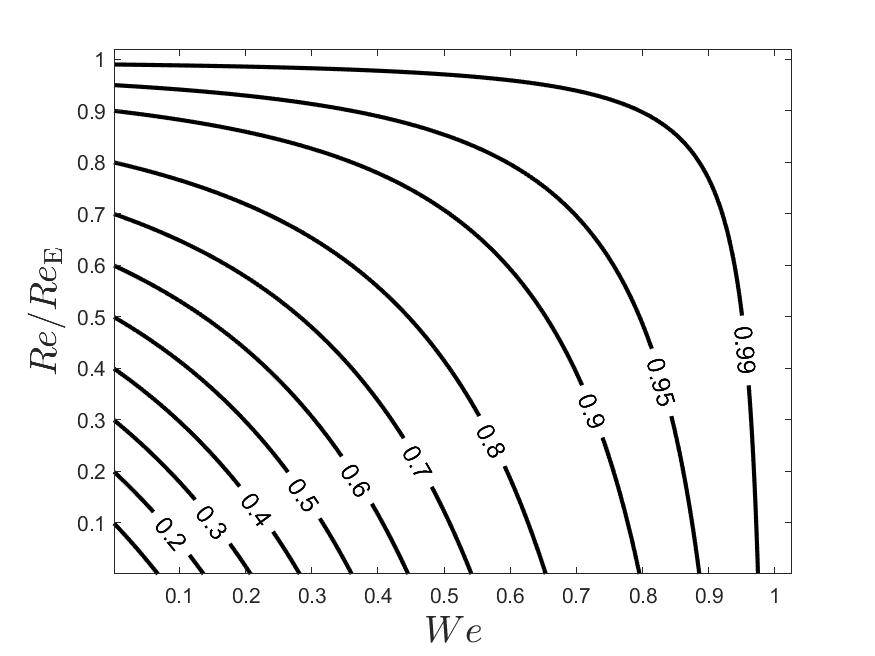}
                %\linewidth or \textwidth
                \caption{Stability boundaries for the Oldroyd-B model in plane Couette flow for the values of $\beta$ indicated on labelled contours. For a given point $(\rho,\omega)$ on each curve, the flow is nonlinearly stable whenever $\Rey < \rho\Rey_E$ and $\We < \omega$. \label{fig:stability_boundary}}
                
\end{figure}

We now discuss the key construction required to prove global stability of viscoelastic Couette flow, namely the perturbation entropy $h(\bs{d})$. To explain the link with entropy, we must view the Oldroyd-B model in a probabilistic context in which $\bs{c}$ is a covariance matrix $\bs{c}(\bs{x},t) = \int_{\mathbb{R}^3} \bs{X}\bs{X}^\top \psi(\bs{x},t,\bs{X}) d\bs{X}$, where $\psi(\bs{x},t,\bs{X})$ is the pdf of the local end-to-end orientation $\bs{X}(\bs{x},t) \in \mathbb{R}^3$ of the polymers at a point $\bs{x}$ in the flow domain. Letting $\overline{\psi}(\bs{X})$ be the pdf associated with the steady solution \eqref{eq:steady}, so that $\bar{\bs{c}} = \int_{\mathbb{R}^3} \overline{\psi}(\bs{X}) \bs{X}\bs{X}^\top d\bs{X}$, it is natural to quantify the deviation of $\psi$ from the steady distribution $\overline{\psi}$ using a {\em relative entropy},  
\[
\mathcal{I}\left(\psi \, | \, \overline{\psi} \right) := \int_{\mathbb{R}^3} \psi(\bs{X}) \log{\left( \frac{\psi(\bs{X})}{\overline{\psi}(\bs{X})} \right)} d\bs{X}. 
\]
If the underlying distributions are Gaussian, with $\psi \sim N(\bs{0},\bs{c})$ and $\overline{\psi} \sim N(\bs{0},\bar{\bs{c}})$, then 
\begin{align} 
& \mathcal{I}\left(\psi \, | \, \overline{\psi} \right) \nonumber \\
& = \frac12 \left( \text{tr}(I+\bar{\bs{c}}^{-1}\bs{d}) - 3 - \log{(\det{(I+\bar{\bs{c}}^{-1}\bs{d})})} \right),\label{eq:rel_entropy}
\end{align}
which is equivalent to the perturbation entropy $h$ if $\alpha$ is replaced by the $\bar{\bs{c}}^{-1}$. 

The relative entropy \eqref{eq:rel_entropy} was used in \cite{jourdain2010} to study stability of Oldroyd-B fluids. However, this approach failed (see \cite[Remark 10]{jourdain2010}) for non-Dirichlet boundary conditions, including the case of plane Couette flow considered here. To view the perturbation entropy $h(\bs{d})$ in an equivalent manner, note that if $\bs{X}_\alpha \in \mathbb{R}^3 \sim N(\bs{0},I-\alpha \bar{\bs{c}})$ is a random Gaussian vector with covariance matrix $I-\alpha \bar{\bs{c}}$, then $\bs{X}_\alpha + \sqrt{\alpha}\bs{X} \sim N(\bs{0},I+\alpha \bs{d})$. Consequently, 
\[
h(\bs{d}) = \mathcal{I}\left( \bs{X}_\alpha + \sqrt{\alpha} \bs{X} \, | \, N(\bs{0},\bs{I}) \right)
\]
and the perturbation entropy can be interpreted as relative entropy of $\bs{X}_\alpha + \sqrt{\alpha}\bs{X}$ with respect to a standard Gaussian distribution. It appears that this structure of perturbation entropy, with the freedom to choose the parameter $\alpha$, is more naturally suited to studying global stability of Oldroyd-B fluids.

\section{Conclusions}

In summary, we have proven sufficient conditions for global stability of Oldroyd-B fluids in the plane Couette geometry. The range of non-dimensional parameter values, given by \eqref{eq:result}, for which global stability holds includes a continuous parameter range interpolating between Upper Convected Maxwell and Newtonian flows. Our results are consistent with the limited existing numerical and experimental observations of viscoelastic channel flows, which indicate instability for either sufficiently high Reynolds number or Weissenberg number (e.g., at $\Rey=2000,\We=1$ in \cite{chokshi2009}; or at $\Rey = 0.01, \We \gtrsim 5$ in \cite{Qin2019}). 

That a gap exists between provable conditions for global stability and experimental observations is not surprising, since this is also true in the simpler case of Newtonian Couette flow. Closing this parametric gap is challenging. It is only very recently \cite{fuentes_2022} that generalizations to Orr's 1907 energy method have enabled quantitative improvements to the range of Reynolds numbers for which even 2D Newtonian Couette flow is provably globally stable. One open question is whether this approach can be coupled with the new class of perturbation entropy functionals introduced in this paper, in order to widen the parametric range of provable global stability of viscoelastic Couette flow. Further extensions to global stability analysis of other canonical geometries such as Poiseuille  and pipe flow are also of interest.  

Beyond global stability, it is an open question whether perturbation entropy functions can be used to study transition to turbulence in viscoelastic flows. For example, they may be used as a metric to quantify worst-case disturbances, as opposed to the energy-based methods typically employed \cite{lieu_2013}. Furthermore, our approach may open the door to rigorous proofs of scaling laws for turbulent statistics of viscoelastic flows, such as the relation $f_\eta \sim \We^\frac13$ between friction factor $f_\eta$ and Weissenberg number observed for microchannel flow in \cite{Qin2019}, by embedding perturbation entropy functionals into the background method \cite{fantuzzi_2022} formalism.

\section{ \label{sec:app_A} Appendix}

{\em A lower bound on $\lambda_{\text{min}}(I+\alpha \bs{d})$.} The eigenvalues of $I-\alpha \bar{\bs{c}}$ are 
\[
\left\{1-\alpha,1-\alpha\left[1+\We^2 \pm \We\sqrt{1+\We^2}\right] \right\}.
\]
 Hence, if $\alpha:=(1-\We)(1-\We/\sqrt{1+\We^{2}})$ then 
\begin{align*}
\lambda_{\text{min}}(I+\alpha\bs{d}) &= \lambda_{\text{min}}(I-\alpha \bar{\bs{c}} + \alpha \bs{c}) \\
&\geq \lambda_{\text{min}}(I-\alpha \bar{\bs{c}})=\We,
\end{align*}
where we have used that $\bs{c} \succeq 0$. Hence, $I+\alpha \bs{d} \succ 0$. $\qed$

{\em Proof of \eqref{eq:H_bounds}.} Let $\lambda_i >0$ be the eigenvalues of  $I+\alpha\bs{d} \succ 0$. Then 
\[
h(\bs{d}) = \frac12 \int \sum_{i=1}^3 \left[ \lambda_i - \log{\lambda_i} -1 \right] d\bs{x}.
\]
Now,  $x\mapsto x - \log{x}-1 \geq 0$ for any $x>0$, with equality only when $x=1$. Consequently, $h(\bs{d}) \geq 0$ with equality if and only if all eigenvalues of $I+\alpha \bs{d}$ are equal to one, which implies $I+\alpha\bs{d}=I$ and $\bs{d}=0$. For the upper bound, let $A:=\alpha \bs{d}(\bs{x},t)$. Then since $\|(I+A)^{-\frac12}A\|_F^2 = \text{tr}((I+A)^{-1}A^2)$, the upper bound holds if for every $\bs{x},t$, 
\[
\log{(\det{(I+A))}} \geq \text{tr}( A - (I+A)^{-1}A^2) = 3 - \text{tr}((I+A)^{-1})
\]
The above inequality holds if $\sum_{i=1}^3 \left[ \lambda_i^{-1} - \log{\lambda_i^{-1}} -1 \right] \geq 0$, which is true by positivity of $\lambda_i$. $\qed$

\bibliographystyle{amsplain}
\bibliography{OB_bib}

\end{document}